\newtheorem{theorem}{Theorem}[section]
\newcommand{\highlight}[2]{%
  \colorbox{#1}{$\displaystyle#2$}}
\definecolor{Set1-A}{RGB}{228,26,28}
\definecolor{Set1-B}{RGB}{55,126,184}
\definecolor{Set1-C}{RGB}{77,175,74}
\begin{document}

\title{Persistent homology of quantum entanglement}

\begin{abstract}
Structure in quantum entanglement entropy is often leveraged to focus on a small corner of the exponentially large Hilbert space and efficiently parameterize the problem of finding ground states. A typical example is the use of matrix product states for local and gapped Hamiltonians. We study the structure of entanglement entropy using persistent homology, a relatively new method from the field of topological data analysis. The inverse quantum mutual information between pairs of sites is used as a distance metric to form a filtered simplicial complex. Both ground states and excited states of common spin models are analyzed as an example. Furthermore, the effect of homology with different coefficients and boundary conditions is also explored. Beyond these basic examples, we also discuss the promising future applications of this modern computational approach, including its connection to the question of how spacetime could emerge from entanglement.
\end{abstract}

\author{Bart Olsthoorn} 
\email{bartol@kth.se}
\affiliation{Nordita, KTH Royal Institute of Technology and Stockholm University, Hannes Alfvéns väg 12, SE-106 91 Stockholm, Sweden}
\date{\today}
\maketitle

\section{Introduction}
Problems in quantum physics are often difficult to solve due to an exponentially large Hilbert space (e.g. $d=2^N$ for $N$ spins), which limits exact diagonalization (ED) to small systems only. However, due to the remarkable entanglement scaling properties found in many physical systems, it is often possible to focus only on a small corner of the Hilbert space. For example, the problem of finding a ground state of a one-dimensional local and gapped Hamiltonian can be parameterized efficiently with matrix product states (MPS) \cite{Hastings2007}. Beyond this class of Hamiltonians, the entanglement properties have also proven useful in projected entangled-pair states (PEPS) \cite{Verstraete2006} and the Multiscale Entanglement Renormalizaton Ansatz (MERA) \cite{PhysRevLett.101.110501}.

Phases of matter can differ in their entanglement scaling properties, and the critical phase transition is typically a point of special interest. The area law states that the entanglement entropy only depends on the surface between two subregions of the system \cite{Eisert2010,PhysRevLett.71.666}. The transverse field Ising model (TFIM) exhibits a quantum phase transition from the ferromagnet with area law entanglement that to a paramagnet that also follows an area law. At the quantum critical point (QCP), the entanglement entropy diverges logarithmically with the subsystem size (volume law) \cite{Holzhey1994,Calabrese2004,Dutta2015}, thus allowing the identification of the phase transition. Another example of a phase transition that is studied through entanglement properties is the many-body localization (MBL) transition \cite{Bauer_2013,Herviou2019}. In general, a randomly selected state in Hilbert space typically follows a volume law.

Considering the daunting complexity of Hilbert space, various large scale computational techniques are employed: phase transitions and order parameters are now often analyzed and constructed with machine learning \cite{vanNieuwenburg2017,Zvyagintseva2022}. 

Here we outline the proposal to use persistent homology to identify entanglement structures and quantum phase transitions. Persistent homology analysis has been used to identify phase transitions in classical spin models \cite{Donato2016,Tran2021,Olsthoorn2020,Cole2021,Sale_2022,PhysRevB.106.085111} and some quantum systems \cite{Tran2021,Spitz2021,He2022-vj}. These examples also illustrate the evolving views on the importance and unique role that entanglement plays as an indicator of qualitative changes in quantum systems. Our approach focuses on the objects that are derivative of quantum states (simplicial complexes) and function as a key descriptor of global properties of the system. The main technique used in this paper is persistent homology, a relatively new method from the field of topological data analysis (TDA) that computes the shapes present in data \cite{10.5555/795666.796607,10.5555/3115476.3115792,VRobins1999,Frosini_sizetheory}. In short, TDA often involves turning a point cloud in data space into a filtered simplicial complex where the filtration is done with a chosen length scale parameter. At each filtration stage, the homology groups are computed and compared. Homology group elements that persist over many stages lead to persistent homology and can be visualized in a barcode or persistence diagram. For more details and a practical introduction to persistent homology, see ref~\cite{Otter2017}.

\begin{figure}[t]
    \centering
    \includegraphics{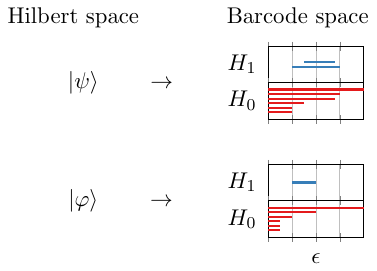}
    \caption{A given state $\psi$ in Hilbert space is mapped to a barcode that describes the topological structure of the entanglement of its subsystems. The proximity parameter $\epsilon$ is computed from the quantum mutual information, and the homology groups indicate features in the emergent geometry.}
    \label{fig:schematic_hilbert_barcodes}
\end{figure}

In this work, we study the geometric and topological features of the entanglement entropy with persistent homology. The logic of the approach is schematically shown in Fig.~\ref{fig:schematic_hilbert_barcodes}. A state $\ket{\psi}$ in Hilbert space is converted to a corresponding barcode that efficiently describes its entanglement structure. The barcode is the result of applying persistent homology to a distance matrix that quantifies the entanglement between all subsystems. Using this approach we discuss the changes in barcode space comparing ground states, excited states and phase transitions. As an example, we focus on two common quantum spin chains in a transverse field: Ising model and XXZ model. The former is the simplest example of a quantum phase transition, and the latter is a model commonly used to study many-body localization \cite{nidari2008,Herviou2019}.

The remainder of this paper is structured as follows. In Sec.~\ref{sec:outline} we outline how persistent homology captures the entanglement structure of a quantum state. An introduction to simplicial homology, persistence and our distance metric are presented in Sec.~\ref{sec:sh}, \ref{sec:ph} and \ref{sec:distance_metric}. A demonstration is carried out for two different spin models in Sec.~\ref{sec:ex_1} and \ref{sec:ex_2}, followed by a discussion in Sec.~\ref{sec:discussion}. We conclude in Sec.~\ref{sec:conclusion}.

\section{Entanglement and persistent homology}
\label{sec:outline}
The foundation of our work is the description of the entanglement structure of a quantum state using persistent homology. Given a state $\ket{\psi}$, we divide the quantum system into $N$ subsystems. For each pair of subsystems, the quantum mutual information $M_{ij}$ is computed. A structure appears when entangled subsystems (with large mutual information) form clusters. A distance metric is defined as the inverse mutual information between two subsystems. This brings strongly entangled subsystems close while non-entangled subsystems are far apart. Given a distance matrix between all subsystems, the computation of persistent homology groups makes it possible to study topological and geometrical features of the entanglement structure. This information can be plotted as a barcode, where long bars indicate topological features that persist over large length scales. The length scale refers to the entanglement through the defined distance metric. The barcode captures the global entanglement structure of the quantum state and gives a detailed view of the "phase portrait" of the system. How to compute this in practice is outlined in the next section.

The barcode is a summary of the topological and geometrical features of the entanglement structure. Each quantum state has a barcode and it can be used in a number of applications. In our work, we demonstrate that changes in the barcode reveal phase transitions. At the same time, the barcode reveals the length scale of the entanglement in the state which is important when developing suitable wave function ansatze. Finally, the barcode describes the emergent geometry of the entanglement, which could provide a starting point to study the emergence of spacetime from entanglement.

The barcodes corresponding to quantum states can be compared in a number of ways. Distance measures such as the Bottleneck distance \cite{Bubenik2015-ic} or the Wasserstein distance \cite{Villani_2009} operate directly on the complete barcodes rather than derived quantities. However, depending on the application, it can be sufficient to simply count the number of bars at a specific length scale (i.e. the Betti number $\beta_k$). We focus on Betti numbers in our work and demonstrate its sensitivity to quantum phase transitions. However, the computational algorithm outlined here is general and can be used in a variety of applications. 

\begin{figure*}
    \centering
    \includegraphics[width=\linewidth]{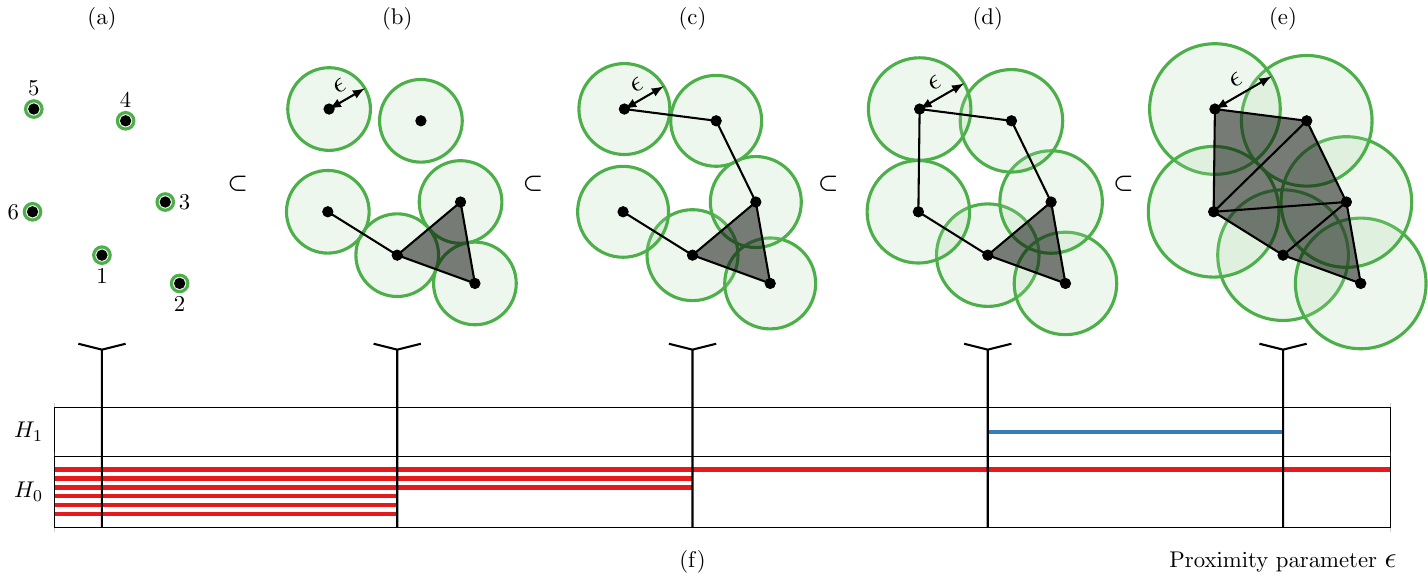}
    \caption{Vietoris-Rips complex of a point cloud with six data points (a) at increasing proximity parameter $\epsilon$ (b-e). The barcode (f) shows the persistent homology.}
    \label{fig:barcode_example}
\end{figure*}

\section{Simplicial homology}
\label{sec:sh}
Simplices are the building blocks for topological spaces and are convenient due to their combinatorial nature. A $k$-simplex consists of $k+1$ vertices, i.e. $S=\left[v_0 v_1 \dots v_k\right]$ (see Fig~\ref{fig:simplices}. Each simplex can be oriented in two ways, $-S$ or $+S$, however, this feature is not always used in persistent homology, as we will see. Combining simplices and connecting them forms a simplicial complex that represents a topological space. This section introduces the concepts of simplicial homology, namely, chains, boundaries, cycles and homology groups.

For each definition, we also study an example. The example simplicial complex $K$ corresponds to topological space that is a disk glued to a circle,
\begin{align}
    K&=\{\underbrace{[012]}_{\text{2-simplex}},\underbrace{[01],[12],[20],\dots}_{\text{1-simplices}},\underbrace{[0],[1],\dots}_{\text{0-simplices}}\}\nonumber\\
    &\cong\begin{gathered}\includegraphics{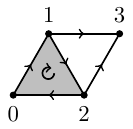}\end{gathered}
    \label{eq:K_example}
\end{align}
as becomes clear once we compute its homology.

\emph{Chains.}\ \ A $k$-chain is a sum of $k$-simplices in $K$. All the possible $k$-chains form a group. For the example in Equation~\ref{eq:K_example}, there are three chain groups: $C_0$, $C_1$ and $C_2$. The $1$-chains are for example:
\begin{align}
    C_1&=\{a_{01}[01]+a_{12}[12]+a_{20}[20]+a_{13}[13]+a_{23}[23]\}\nonumber\\
    &=\left<[01],[12],[20],[13],[23]\right>
    \label{eq:C_1_K}
\end{align}
where the $1$-simplices form a basis. An element in the chain group is denoted by $c\in C_k$, and this element represents a specific $k$-chain.

\emph{Boundaries.}\ \ The boundary operator $\partial$ for a $k$-simplex is defined as
\begin{equation}
    \partial S=\sum_{j=0}^k\left(-1\right)^j\left[v_0 v_1\dots \hat{v}_j \dots v_k\right]
    \label{eq:boundary_operator_Z}
\end{equation}
where $\hat{v}_j$ is removed from the sequence. In other words, it is a sum of the faces of the simplex, where the faces are $(k-1)$-simplices. The \emph{
Fundamental Lemma of Homology} \cite{books/daglib/0025666} states that applying the boundary twice always leads to zero, i.e. $\partial \partial S=0$. The boundary operator is typically applied to a chain group $C_k$, leading to a boundary group $B_k$. For Equation~\ref{eq:K_example}, the boundary group:
\begin{align}
    B_0&=\partial_1 C_{1}\nonumber\\
    &=\{
        \ a_{01}([1]-[0])+
        a_{12}([2]-[1])+\nonumber\\
        &\quad\quad a_{20}([0]-[2])+
        a_{13}([3]-[1])+\nonumber\\
        &\quad\quad a_{23}([3]-[2])
    \ \}\nonumber\\
    &=\{
    \ (-a_{01}+a_{20})[0]+(a_{01}-a_{12}-a_{13})[1]+\label{eq:B0_example}\nonumber\\
    &\quad\quad  \ (a_{12}-a_{20}-a_{23})[2]+(a_{13}+a_{23})[3]
    \ \}
\end{align}

\emph{Cycles.}\ \ A $k$-cycle is a $k$-chain with an empty boundary, i.e. $\partial_k c=0$. This group is denoted by $Z_k$, where $Z$ stands for the German \emph{zyklus} (cycle). By definition, the $k$-cycles are a subset of all the $k$-chains, i.e. $Z_i\subset C_i$. Mathematically, it corresponds to taking the kernel of the boundary operator, i.e. all the elements that are mapped to the identity element. In other words, the $k$-cycles group is,
\begin{equation}
Z_k=\text{Ker}(\partial_k)=\{c \in C_k\ |\ \partial(c) = 0\},
\end{equation}
because $0$ is the identity element. For Equation~\ref{eq:C_1_K}, it means the coefficients $a\in\mathbb{Z}$ are constrained by $\partial(c)=0$, narrowing it down to two generating elements for the 1-cycles group,
\begin{align}
    Z_1&=\left<\highlight{Set1-A!50}{[01]+[12]+[20]},\highlight{Set1-C!50}{[12]-[13]+[23]}\right>\nonumber\\
    &\cong \begin{gathered}\includegraphics{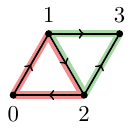}\end{gathered}.
\end{align}
Note that the larger cycle of $0-1-3-2$ can be constructed by a linear combination of this basis.

\emph{Homology groups.}\ \ The homology groups $H_k$ are defined as the quotient of cycles and boundaries,
\begin{equation}
    H_k=\frac{Z_k}{B_k}=\frac{\text{Ker}\  \partial_k}{\text{Im}\ \partial_{k+1}},
\end{equation}
where $Z_k$ are the $k$-cycles and $B_k$ the $k$-boundaries. This is sometimes referred to as ``cycles mod boundaries". 
The first homology group for our example contains one generating element,
\begin{align}
    H_1(K)&=\frac{\left<[01]+[12]+[20],[12]-[13]+[23]\right>}{\left<[12]-[02]+[01]\right>}\nonumber\\
    &\cong\left<[12]-[13]+[23]\right>\cong \mathbb{Z}, 
\end{align}
which captures the 1-dimensional hole (of the empty triangle $1-3-2$). The $1$-cycle between the points $0-1-2$ is ``modded out" because it is the boundary of the 2-simplex (filled triangle). A similar computation leads to the other homology groups. In summary, the homology groups for the example $K$ are $H_0(K)\cong \mathbb{Z}$, $H_1(K)\cong \mathbb{Z}$ and $H_k(K)\cong 0$ for all $k > 1$. In other words, the topological space $K$ contains one connected component (as described by $H_0$) and contains one 1-dimensional hole (as described by $H_1$).

In this example, the homology computation is performed over the integers because the coefficients are chosen to be $a\in \mathbb{Z}$, but most practical codes use booleans, i.e. $a\in \mathbb{Z}_2$. For more examples of homology computations and the effect of different coefficients, see Appendix~\ref{sec:non_orientable_torsion}.

\section{Persistent homology}
\label{sec:ph}

The persistent homology of a set of discrete data characterizes the shapes that are present in the data at different length scales. In practice, this means the simplicial homology computations described in the previous section with a method of \emph{filtration}. The filtration refers to the choice of a distance metric, and how this metric leads to a set of nested simplicial complexes. A simple example of six data points in the two-dimensional Euclidean plane is shown in Fig.~\ref{fig:barcode_example}. The chosen proximity parameter $\epsilon$ is, in this case, given by simply the Euclidean distance, and visualized by the growing disks. As soon as disks overlap, a new simplex is formed. When three points are connected, a 2-simplex (filled triangle) is formed. In general, for $k$ points, a ($k$-1)-simplex is formed. This particular filtration leads to a so-called Vietoris-Rips complex. The simplicial homology at every level of filtration is summarized in the barcode Fig.~\ref{fig:barcode_example} (f). More specifically, a generating element of a homology groups that spans multiple length scales is indicated by a bar with a start (birth) and end (death). The length of the bar is also referred to as the lifetime.

In this study, the discrete data are the quantum subsystems, and the distance metric is based on mutual information as a measure of entanglement. The distance matrix between all quantum subsystems $i$ and $j$, i.e. $D_{ij}$, forms a sequence of simplicial complexes, $K_1 \subset K_2 \subset \dots \subset K_n$. For each simplicial complex $K_i$, the homology groups $H_k$ are computed, where $k$ refers to $k$-dimensional simplices (see Fig.~\ref{fig:simplices}). The homology groups $H_k$ are computed for each $K_i$ and sequences of identical homology lead to so-called persistent homology. The sequence of homology groups are typically visualized in a persistence barcode or persistence diagram. Both show the same information, and reveal shapes present in the quantum mutual information. A barcode is a collection of line segments that represent the generating elements of a homology group that span multiple length scales. When two elements become homologous, the older one survives, as described by the \emph{elder rule} \cite{books/daglib/0025666}.

In practical applications, since the barcode itself is not a scalar quantity, it is common to either use scalar properties derived from the barcode or construct a scalar distance between two barcodes. For example, the Betti number is an integer,
\begin{equation}
    \beta_k=\text{rank}(H_k)
    \label{eq:betti_number}
\end{equation}
that counts the rank of the $k$th homology group. Another example is the lifetime of a specific bar in the barcode. Bars with a long lifetime indicate that a feature (like a 1-dimensional hole) persists over large range of length scale. Features with a short lifetime are sometimes considered to be noise in the input point cloud. However, these features can also indicate the curvature of the manifold that the point cloud was sampled from \cite{BubenikCurvature}.

There are many different codes that implement the computation of persistent homology given a distance matrix. An example of the algorithm is shown in Appendix~\ref{sec:ph_algorithm}. We use the GUDHI Python module for the construction of a Vietoris-Rips complex from the distance matrix, and the computation of the barcode \cite{gudhi:RipsComplex}.

In our method, each state $\ket{\psi}$ has a corresponding barcode and a set of homology groups. The barcode is a fingerprint of the entanglement structure of the state. Abrupt changes can indicate phase transitions, and barcodes can also differentiate between phases.

\begin{figure}[t]
    \centering
    \includegraphics[width=\linewidth]{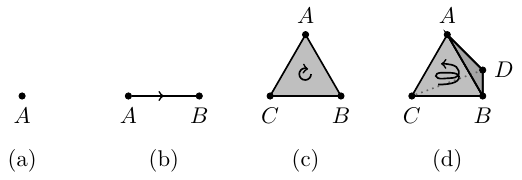}
    \caption{Here we show (a) $0$-simplex (point), (b) $1$-simplex (line segment), (c) $2$-simplex (filled triangle), (d) $3$-simplex (filled tetrahedron). Higher-dimensional simplices exist but are not shown here for simplicity. Simplices are glued together to form a simplicial complex and its topological properties are described by simplicial homology. Entangled quantum subsystems $A$, $B$, $C$ and $D$ are represented by 0-simplices. A $k$-simplex consists of $k+1$ subsystems.}
    \label{fig:simplices}
\end{figure}

\begin{figure}[b]
    \centering
    \includegraphics{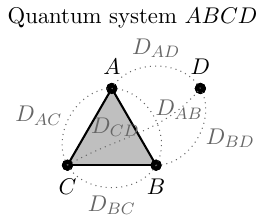}
    \caption{Example of a simplicial complex created from a quantum system comprising four subsystems: $A$, $B$, $C$ and $D$. The entanglement entropy between all subsystems is used to compute the distance $D_{ij}$ (Equation~\ref{eq:MI_distance}). This complex corresponds length scale $\epsilon$, where in this case subsystems $A$, $B$ and $C$ form a 2-simplex. Subsystem $D$ is not connected to the $ABC$ component, indicating that the distances between $D$ and $A$, $B$ and $C$ is larger than $\epsilon$. Here, the Betti numbers are $\beta_0=2$ (i.e. 2 connected components: $ABC$ and $D$) and $\beta_1=0$ (no 1-dimensional holes, since the 1-cycle $ABC$ is also the boundary of the 2-simplex $ABC$).}
    \label{fig:subsystem_complex}
\end{figure}

\section{Distance metric based on entanglement}
\label{sec:distance_metric}
The $N$ particles (or spins) form a discrete point cloud, where the distance metric between spins is the (additive) inverse of the quantum mutual information (MI). Mutual information is commonly used to identify clusters of entanglement, and as a probe for phase transitions (e.g. MBL and quantum phase transitions) \cite{Tran2021,PhysRevLett.118.016804,PhysRevA.97.042330,Herviou2019}. The mutual information between two sites $i$ and $j$ is defined as
\begin{equation}
    0 \leq M_{ij}=S_i + S_j - S_{ij} \leq 2\ln{2}.
    \label{eq:mutual_information}
\end{equation}
Here, the entanglement entropy $S_i$ and $S_{ij}$ refer to taking only sites $i$ and $i,j$ in subsystem $A$, respectively.
\begin{theorem}
Let $D_{ij}$ be the inverse of the MI between sites $i$ and $j$,
\begin{equation}
    2\ln{2} \geq D_{ij}=2\ln{2} - M_{ij} \geq 0.
    \label{eq:MI_distance}
\end{equation}
This is a distance metric that brings strongly entangled sites ($M\rightarrow 2\ln{2}$) close together ($D\rightarrow 0$), while non-entangled sites ($M\rightarrow 0$) are far apart ($D\rightarrow 2\ln{2}$).
\end{theorem}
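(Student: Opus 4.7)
The plan is to verify the four axioms of a distance metric — non-negativity, symmetry, the identity condition, and the triangle inequality — three of which will follow immediately from the definition. The bounds $0 \leq M_{ij} \leq 2\ln 2$ recorded just above the statement already give $0 \leq D_{ij} \leq 2\ln 2$, and symmetry $D_{ij} = D_{ji}$ follows from $M_{ij} = M_{ji}$. The identity of indiscernibles is mildly subtle because the formula does not literally give $D_{ii}=0$; the cleanest fix is to declare $D_{ii}:=0$ by convention and treat $D$ as a pseudo-metric on the sites, which is all that is needed for the Vietoris--Rips construction used later in the paper.

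The real content lies in the triangle inequality $D_{ik} \leq D_{ij} + D_{jk}$, which is equivalent to the entropy inequality
\[ M_{ij} + M_{jk} - M_{ik} \leq 2\ln 2. \]
My strategy is to expand the left-hand side in single-site and two-site von Neumann entropies,
\[ M_{ij} + M_{jk} - M_{ik} = 2S_j + S_{ik} - S_{ij} - S_{jk}, \]
and then apply strong subadditivity $S_{ij} + S_{jk} \geq S_j + S_{ijk}$ to the reduced state on sites $i,j,k$. This collapses the right-hand side to $S_j + (S_{ik} - S_{ijk})$, and the Araki--Lieb triangle inequality $S_{ijk} \geq S_{ik} - S_j$ gives $S_{ik} - S_{ijk} \leq S_j$, yielding the overall bound $2S_j$.

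The hard part — or rather, the place where the specific constant $2\ln 2$ enters — is converting $2S_j$ into $2\ln 2$. This step requires the local Hilbert space at site $j$ to be two-dimensional, so that $S_j \leq \ln 2$. For spin-$d$ sites the same chain of estimates would yield the bound $2\ln d$, matching the natural upper bound on $M_{ij}$ in that setting, so the pseudo-metric structure itself is not peculiar to qubits; only the numerical constant in the statement is. I expect no other technical obstacles, since both strong subadditivity and Araki--Lieb are standard and apply to arbitrary finite-dimensional states.
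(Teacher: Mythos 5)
Your proof is correct and arrives at the same final bound as the paper --- the triangle-inequality deficit is controlled by $2S_j \le 2\ln 2$, which is where the qubit assumption enters --- but via a slightly different chain of entropy inequalities. The paper reduces $D_{xy}\le D_{xz}+D_{zy}$ to $2S_z + S_{xy} - S_{xz} - S_{zy}\le 2\ln 2$ and then bounds $S_{xz}+S_{zy}$ from below by $S_x+S_y$ (strong subadditivity in the form $S_{AB}+S_{BC}\ge S_A+S_C$) and $S_{xy}$ from above by $S_x+S_y$ (ordinary subadditivity), so the endpoint entropies cancel and only $2S_z$ survives; the three-site entropy never appears. You instead pass through $S_{ijk}$, using strong subadditivity in its conditional form $S_{ij}+S_{jk}\ge S_j+S_{ijk}$ and then Araki--Lieb, $S_{ijk}\ge S_{ik}-S_j$, to eliminate it. The two routes are equivalent in strength: the two forms of strong subadditivity are interchangeable, and Eq.~(\ref{eq:S_triangle_ineq}) of the appendix contains both the subadditivity upper bound the paper uses and the Araki--Lieb lower bound you use. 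Neither route buys extra generality, though yours invokes a form of strong subadditivity not explicitly recorded in the appendix. Two points where your write-up is more careful than the paper's: you flag explicitly that $D$ is only a pseudo-metric (the formula does not give $D_{ii}=0$, which must be imposed by convention, and distinct maximally entangled sites can have $D_{ij}=0$; the paper adopts the same convention for $D_{ii}$ but does not note the failure of identity of indiscernibles), and you correctly isolate the constant $2\ln 2$ as the only qubit-specific ingredient, with $2\ln d$ the natural replacement for local dimension $d$.
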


\begin{proof}
The distance metric defined in Equation~\ref{eq:MI_distance} satisfies the axioms for a metric because it is symmetric and satisfies the triangle inequality. This triangle inequality relies on properties of quantum entropy, and we start by writing the inequality in terms of entropy, 
\begin{align}
D_{xy}&\leq D_{xz}+D_{zy}\\
2 S_z+S_{xy}-S_{xz}-S_{zy}&\leq 2\ln{2}.\nonumber
\end{align}
In order to show that the inequality holds, we first note that $S_{xz}$ and $S_{zy}$ are non-negative and the term $X=S_{xz}+S_{zy}$ can be replaced by $S_x+S_y$, which is guaranteed to be equal or smaller by the strong subadditivity of quantum entanglement (Equation~\ref{eq:S_strong_subadd}). Furthermore, the term $S_{xy}$ is replaced by an expression that is guaranteed to be larger, $S_{xy}\leq S_x+S_y$ (Equation~\ref{eq:S_triangle_ineq}). This leads to,
\begin{align}
    2 S_z+S_x+S_y-S_x-S_y&\leq 2\ln{2}\\
    2 S_z&\leq 2\ln{2},
\end{align}
which is true because $S_z$ has a maximum value of $\ln{2}$. Finally, the distance of point $i$ with itself, $D_{ii}$, is never evaluated when constructing homology, and we can consider it to be zero.
\end{proof}

Given a set of discrete data (quantum subsystems) and the distance matrix (inverse mutual information), we can now study its homology (see Fig.\ref{fig:subsystem_complex}).

\begin{figure}[b]
    \centering
    \includegraphics[]{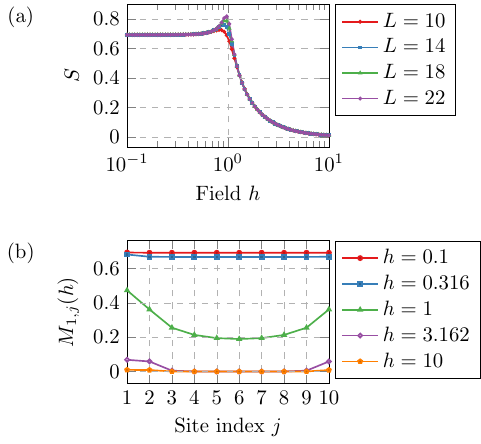}
    \caption{(a) Half-chain entanglement entropy $S$ as a function of field $h$ in the 1D TFIM. For small fields $h$ the entanglement is constant, $\ln(2)$. At the critical point of $h=1$, $S$ grows as $\ln(N)$. (b) Quantum mutual information between first site $1$ and site $j$ as a function of field $h$ (logarithmically spaced) in the 1D TFIM. The mutual information $M_{1j}$ is largest between nearby sites, indicating the local interactions of the Hamiltonian. Periodic boundary conditions are used and lead to increased $M_{1j}$ for large $j$.}
    \label{fig:1d_qmi_tfim}
\end{figure}

\begin{figure*}
    \centering
    \includegraphics{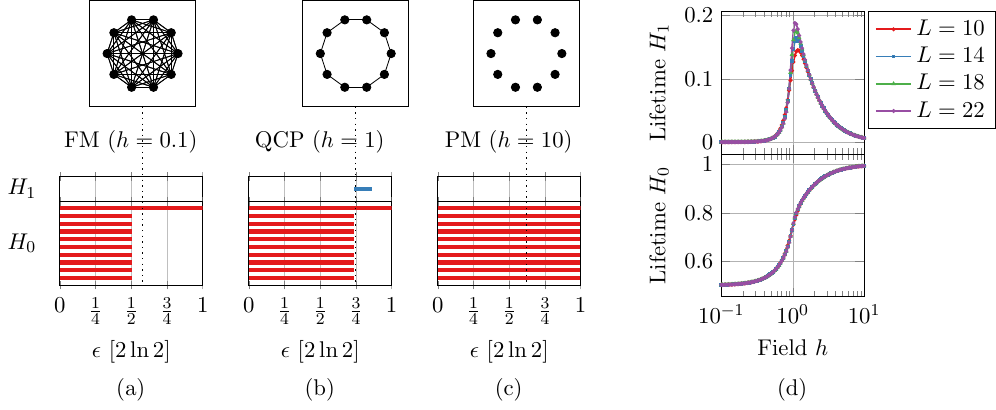}
    \caption{Barcodes for transverse-field Ising chain with $L=10$ sites at (a) $h=0.1$, (b) $1$ and (c) $10$. The proximity parameter $\epsilon$ refers to the length scale set by the distance matrix $D(i,j)$. The dotted lines and panels above show the simplicial complex for a given $\epsilon$. At the quantum critical point, the spin chain forms a cycle of entanglement with a 1-dimensional hole. (d) Lifetime of $H_1$ and $H_0$ bars. The persistence of the $1$-cycle at the QCP increases with system size $L$.}
    \label{fig:TFIM_field_barcodes}
\end{figure*}

\begin{figure}[b]
    \centering
    \includegraphics{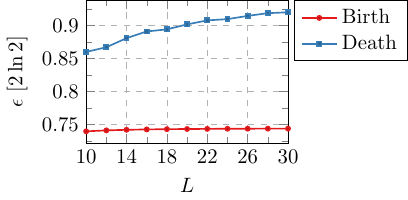}
    \caption{Birth and death of the $H_1$ persistence bar at the quantum critical point $h=1$ of the transverse-field Ising chain of varying size $L$. The birth value is set by the distance between neighboring sites and converges to $D_{01}/2\ln{2}\approx 0.744293$. The death value is set by the distance between two maximally separated sites and converges to $D_{0\infty}/2\ln{2}=1$ (see Appendix~\ref{sec:exact_MI_TFIM}).}
    \label{fig:ising_H1_b_d}
\end{figure}

\section{Example 1: Ising chain in transverse field}
\label{sec:ex_1}

To demonstrate our approach to persistent homology of quantum states, we study the simplest model of a quantum phase transition (QPT), the transverse field Ising model (TFIM) on a one-dimensional lattice. The Hamiltonian $\mathcal{H}$ of the transverse field Ising model (TFIM) is \cite{Dutta2015},
\begin{equation}
    \mathcal{H}=-\sum_{i}\sigma_i^x \sigma_{i+1}^x - h \sum_{i} \sigma_i^z,
\end{equation}
where $\sigma$ are the Pauli matrices. The exact diagonalization and entanglement entropy computation is implemented using the QuSpin package \cite{Weinberg_2017}. 
There is a QPT that can be measured by the magnetization $M=|1/N \sum_{i}\sigma_i^x|$. For $h=0$, the ferromagnetic ground state is the basis state with all spins pointing along the $x$ direction. For large $h$, the ground state is a quantum paramagnet, a superposition of all basis states.

For $h=0$, there is a degenerate ground state that is a product state of all spin up $\ket{\uparrow\uparrow\dots\uparrow}$ or all down $\ket{\downarrow\downarrow\dots\downarrow}$. For a small but finite $h$, these levels split with order $h^N$ and the ground state is a macroscopic superposition ($\ket{\uparrow\uparrow\dots\uparrow}+\ket{\downarrow\downarrow\dots\downarrow}$) with $\ln{2}$ half-chain entanglement entropy (see Fig.~\ref{fig:1d_qmi_tfim} (a)). The system undergoes a quantum phase transition at $h=1$, and the entanglement structure form a 1-cycle, as will be discussed. At $h=\infty$ the ground state is a a quantum paramagnet, i.e. superposition of all basis states $\frac{1}{\sqrt{2^N}}\sum_i^{2^N}\ket{i}$ and zero half-chain entanglement entropy.

The quantum ground state is converted into a barcode by computing the distance matrix $D$ between all sites. This constitutes constructing Vietoris-Rips complexes from this distance matrix and computing its homology groups. The barcodes are shown in Figure~\ref{fig:TFIM_field_barcodes}. At zero length scale $\epsilon=0$, the number of $H_0$ bars (i.e. $\beta_0$) indicates the number of quantum subsystems (spins in this case) and the dimensionality of the matrix $D$. At large length scale, at least one $H_0$ bar survives, indicating one large connected component, similar to the situation shown in the simple two-dimensional Euclidean example (Fig.~\ref{fig:barcode_example}).

Both the low ($h<1$) and high ($h>1$) field ground states have constant mutual information $M_{ij}$ for all pairs (see Fig.~\ref{fig:1d_qmi_tfim}). This causes all sites to pair up at the same length scale in the barcode, as indicated by the end of the $H_0$ bars, also referred to as the death of the feature. For the ferromagnetic (FM) phase, all the sites have the same pairwise distance and are connected at the length scale of $\epsilon= \ln{2}$ (see Fig.~\ref{fig:TFIM_field_barcodes} (a)). For the quantum paramagnet (PM), the spins are maximally separated, which is essentially a rescaled version of having points being infinitely separated in space, and each site survives as a single component up to the maximum length scale $\epsilon=1 \cdot 2\ln{2}$ (see Fig.~\ref{fig:TFIM_field_barcodes} (c)).

Around the quantum critical point ($h=1$), neighboring sites are more strongly entangled than next-nearest neighbors. Due to the symmetry of the Hamiltonian and the periodic boundary conditions, the distances are the same for each site, e.g. $D_{12}=D_{23}$. This means we have a 1-dimensional hole to form inside the ring of spins that persists over a finite length scale (see Fig.~\ref{fig:TFIM_field_barcodes} (b)). At slightly longer length scales, next-nearest neighbors are connected, after that next-next nearest neighbors, and so on. This process continues until furthest-separated spins on the chain are connected and the hole closes. This is a unique feature of the critical point due to the entanglement structure of the critical state, and it can be characterized by the start (birth) and end (death) of the bar in $H_1$, as shown in Fig.~\ref{fig:TFIM_field_barcodes} (d).

Figure~\ref{fig:ising_H1_b_d} shows the birth and death value of $H_1$ at $h=1$ for different chain lengths. This model is exactly solvable (see Appendix~\ref{sec:exact_MI_TFIM}), and we can therefore compare the results to the infinite chain. As described earlier, the cycle is formed when the nearest neighbors connect, and this is set by the distance $D_{01}$. The hole closes when maximally separated spins connect, which in the case of an infinite chain is set by $D_{0\infty}$. For the infinite chain, the distances are $D_{01}/2\ln{2}\approx 0.744293$ and $D_{0\infty}/2\ln{2}= 1$, which matches the birth and death curves shown in Fig.~\ref{fig:ising_H1_b_d}.

\section{Example 2: XXZ spin chain in transverse field}
\label{sec:ex_2}
In order to study multiscale entanglement, we now focus on the 1D XXZ spin chain in tranverse field with Hamiltonian
\begin{equation}
    H=-\sum_i\left[ \left(\sigma_i^+\sigma_{i+1}^-+\sigma_i^-\sigma_{i+1}^+\right) + \frac{\Delta}{2}\sigma_i^z\sigma_{i+1}^z+h_i\sigma_i^z\right],
\end{equation}
where $\sigma_i^\pm = \sigma_i^x \pm i \sigma^y_i$, $\Delta$ is the interaction strength, and $h_i$ is a random magnetization. We take $\Delta=1$ and the field uniformly random $h_i=[-W,W]$. This model is often used to study the nature of the many-body localization (MBL) phase transition \cite{nidari2008,Herviou2019}. The critical value of disorder reported in the literature varies, usually close to $W_c\approx 3.8$ as obtained through large-scale ($L=26$) exact diagonalization \cite{Pietracaprina2018}. However, more recent studies suggest a slightly larger value closer to $W_c\approx 5$ \cite{PhysRevLett.115.187201,PhysRevB.98.174202}. In the ergodic phase (small $W$), the excited states have volume law entanglement. In the MBL phase (large $W$), the spins are mostly weakly entangled, but the entanglement structure is the object of our interest. Clusters (of spins) are defined as a subsystem that has stronger entanglement internally than with the rest of the system.

\begin{figure}[t]
    \centering
    \includegraphics{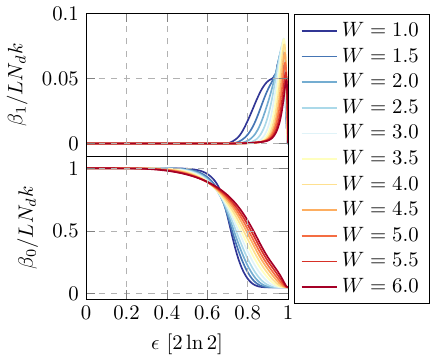}
    \caption{Betti numbers $\beta_0$ and $\beta_1$ for varying disorder strength $W$. XXZ spin chain with $L=16$ sites, $N_d=3000$ realizations and $k=50$ states.}
    \label{fig:xxz_betti}
\end{figure}

Using exact diagonalization, we select $k=50$ eigenstates from the middle of the spectrum for $N_d=3000$ disorder realizations (i.e. changing the random magnetization $h_i$). For each eigenstate, we construct a Vietoris-Rips complex with the distance metric of Equation~\ref{eq:MI_distance}. The resulting persistence barcodes are merged by discarding the information from which realization each barcode originated. Since this leads to barcodes with many bars, we use Betti numbers (Equation~\ref{eq:betti_number}) to summarize and investigate the results. Figure~\ref{fig:xxz_betti} shows the Betti numbers for varying disorder strengths $W$. As the disorder strength $W$ decreases, there are fewer connected components as counted by the normalized $\beta_0$.

Figure~\ref{fig:xxz_max_betti_1} shows the maximum of $\beta_1$ (see Fig.~\ref{fig:xxz_betti}) for varying disorder strength $W$. This maxima occur at a length scale around $\epsilon\approx 0.95 \cdot 2\ln{2}$. The Betti number $\beta_1$ essentially counts the number of spin clusters that are more strongly entangled (strength set by $\epsilon$) within the cluster than the environment. The minimum set of spins to form a 1-cycle is four spins (because three connected spins would form a 2-simplex in a Vietoris-Rips complex). For smaller system sizes $L\leq 14$, the maximum is at small disorder strength $W$. However, for larger systems, a maximum in the expected range of $3-5$ appears. Extrapolating the peak leads to a predicted critical strength of $W_c\approx 4.79$, which is in line with the aforementioned values reported in literature.

\begin{figure}[t]
    \centering
    \includegraphics{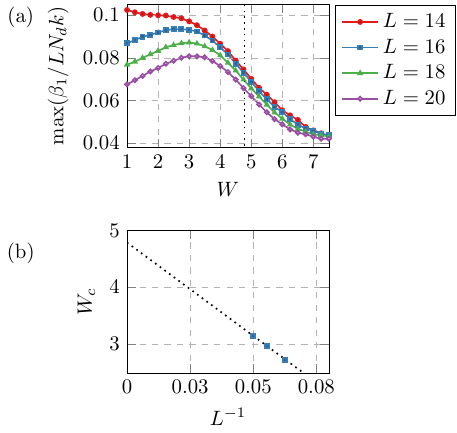}
    \caption{(a) Maximum Betti number of the first homology group for varying disorder strength $W$. 3000 disorder realizations (b) The disorder strength with largest Betti number, i.e. peak in (a), that extrapolates to a critical strength of $W_c\approx 4.79$ (marked by dotted line).}
    \label{fig:xxz_max_betti_1}
\end{figure}

\section{Discussion}
\label{sec:discussion}
Here we lay out a few examples of related work and future directions to explore. Computing the entanglement entropy of all possible partitions is numerically impractical due to the factorial number of possibilities. For single pairs of $N$ particles (or spins), the number is ``$N$ choose 2'' (scales as $O(N^2)$). However, due to the non-additivity of entropy, this does not necessarily capture all the structure present in the entanglement of the state.

Rather than starting from individual spins, it is also possible to perform a recursive  bipartitioning of the spin chain (as used in \cite{Herviou2019}). This turns the wave function into a binary tree, where each node is a set of spins. Compared to our method, which is bottom-up from spin pairs, this could be considered a top-down approach. Given the binary tree, the distance between leaf nodes (single spins) can be used to compute homology. This would be an alternative to our bottom-up approach based on mutual information and an interesting direction to explore in the future.

Another interesting future direction is to connect barcodes to matrix product states and tensor networks. In matrix product states, the bond dimension refers to the dimension of matrices used to represent the quantum state, and a small bond dimension comes with low computational complexity. For example, ground states of gapped Hamiltonians obey the area law and have constant bond dimension \cite{Hastings2007}. The bond dimension is dependent on the decay of correlations in the model, and these correlations are measured by the barcode, providing a potential bridge between the two. Regarding tensor networks, the Multiscale Entanglement Renormalizaton Ansatz (MERA) \cite{PhysRevLett.101.110501} is used for infinite 1D quantum critical states and represents an interesting case to study through the lens of homology. Another state to consider is the Rainbow State, which is a ground state of the inhomogeneous Ising transverse field chain that also exhibits a volume law \cite{Ramirez2015-lr,PhysRevB.104.195147}.

We compute entanglement entropy using singular value decomposition (SVD), but diagonalizing the partial trace of a density matrix (constructed as the outer product of the pure state) has similar computational complexity. The purity (or linear entropy) of a quantum state could be an alternative to the von Neumann entropy that is used here, and it has the benefit of not requiring diagonalization of the density matrix.

The first example of the Ising chain discussed in this paper can be solved efficiently using the Jordan-Wigner transformation. However, this is a special case, and it is not generally efficient for all interacting Hamiltonians, so we do not rely on this transformation in this work. Furthermore, the periodic boundary conditions are essential for forming a 1-cycle in the transverse-field Ising chain and this appears to limit the usefulness of barcodes. In Appendix~\ref{sec:TFIM_OBC} we show an example with open boundary conditions, where a phase transition is still captured.

The von Neumann entropy is typically not experimentally accessible, whereas the R\'enyi entropy of order $\alpha=2$ is experimentally accessible in certain setups \cite{Brydges2019}. This would make it possible to compute the barcode of an experimental setup.

Regarding persistent homology, we note that our homology groups are computed with $\mathbb{Z}_2$ (boolean) coefficients, the most common choice in topological data analysis. The choice of coefficients determines whether a simplex can occur multiple times in a chain or whether the direction of travel (represented by the sign of the coefficient) matters. The change in barcodes for entanglement structure when computing homology with different coefficients is also an interesting direction to explore. For examples of manifolds and the effect of homology with different coefficients, see Appendix~\ref{sec:non_orientable_torsion}.

TDA is a powerful tool that can be applied to the questions of emergent structures in the topology of the Hilbert space describing the system. This connects directly to a deeper question: does spacetime emerge from entanglement? This is one of the questions that is part of the Simons Collaboration: \emph{It from Qubit} \cite{ITFROMQUBIT}. The study of the emergent geometry from entanglement is still a topic of active research \cite{Raamsdonk2010,Roy2020,Carroll2021aiq,NeyForthcoming-NEYFQE}. For example, Cao \emph{et al.} have proposed the use of mutual information and classical multidimensional scaling to form emergent geometry \cite{Cao2017}.

\section{Conclusion}
\label{sec:conclusion}
Persistent homology is a novel tool for dealing systematically with the entanglement structure in quantum states. The topological features that it computes can be used to answer a number of scientific questions. First, the barcode of a quantum state changes dramatically when the state undergoes a quantum phase transition. It is therefore a new type of quantum order parameter. We have demonstrated its capability for two basic examples of the Ising chain and XXZ spin chain in a transverse field. Second, it can guide the development of suitable wave function ansatze, with for example tensor networks. Third, it provides a numerical approach for studying the emergent geometry of entanglement. This is very relevant to the scientific question whether spacetime emerges from entanglement.

The focus in this multidisciplinary work is on the application of persistent homology to equilibrium quantum phase transitions, the first of the aforementioned uses of the method. However, it is also possible to examine many quantum phenomena through the lens of homology and Betti numbers of the entanglement structure. The effect of adding time dependence to the presented scheme is also an promising future direction to explore, and recent work has already introduced some persistent homology observables for quantum many-body dynamics\cite{Spitz2021}.

\section{Acknowledgements}\label{acknowledgements}
This work emerged from initial discussions and research with Alexander V. Balatsky, for which the author thankfully acknowledges. The author also acknowledges fruitful exchanges of ideas with Qian Yang,  Lo\"{i}c Herviou, Jens H. Bardarson, R. Matthias Geilhufe, Gerben Oling and Benjo Fraser. This research is funded by the VILLUM FONDEN via the Centre of Excellence for Dirac Materials (Grant No. 11744), the European Research Council ERC HERO-810451 grant, University of Connecticut, and the Swedish Research Council (VR) through a neutron project grant (BIFROST, Dnr. 2016-06955). Nordita is partially supported by Nordforsk. The author also acknowledge computational resources from the Swedish National Infrastructure for Computing (SNIC) at the High Performance Computing Centre North (HPC2N).

\bibliography{references}

\appendix
\section{Bipartite entanglement entropy}
The distance metric used in our work is based on quantum mutual information and the properties of entanglement entropy. We therefore recall the key properties of entanglement entropy. Bipartite entanglement entropy is computed by performing Schmidt decomposition on a quantum state. We start by breaking down the Hilbert space in two parts $\mathcal{H}=\mathcal{H}_A\otimes \mathcal{H}_B$, and form a matrix $C_{ij}$ such that,
\begin{equation}
    \ket{\psi}=\sum_{i=1} \sum_{j=1} C_{ij} \ket{i}_A \otimes \ket{j}_B,
\end{equation}
where $\{\ket{i}_A\}$ and $\{\ket{j}_B\}$ are the basis sets for $\mathcal{H}_A$ and $\mathcal{H}_B$, respectively \cite{preskill1998lecture}.
Schmidt decomposition is essentially singular value decomposition $\bm{C}=\bm{U}\bm{\Sigma}\bm{V}^\dagger$ where $\lambda_\alpha$ forms the diagonal of $\bm{\Sigma}$. Assuming $N_B > N_A$, then the state can be expressed in the following form:
\begin{equation}
    \ket{\psi}=\sum_\alpha^{2^{N_A}} \lambda_\alpha \ket{\alpha}_A\otimes \ket{\alpha}_B
\end{equation}
If there are more than one non-zero singular values $\lambda_\alpha$, then the state is entangled. The singular values can also be used to compute the entanglement entropy
\begin{equation}
    S = -\sum_{j} |\lambda_j|^2 \ln{\left(|\lambda_j|^2\right)}
\end{equation}
which is zero if the state $\ket{\psi}$ is a product state (without entanglement). It is also upper bounded to $\ln{\left(d\right)}$ where $d$ is the Hilbert space dimension. This is $N\ln{2}$ in case of $N$ spins. Two other properties of quantum entropy that are used to form the distance metric are the triangle inequality,
\begin{equation}
    |S_i-S_j|\leq S_{ij} \leq S_i+S_j,
    \label{eq:S_triangle_ineq}
\end{equation}
and strong subadditivity \cite{Lieb1973},
\begin{equation}
    S_x+S_z \leq S_{xy} + S_{yz}.
    \label{eq:S_strong_subadd}
\end{equation}

In the main text we show how the bipartite von Neumann entanglement entropy $S$ is used to compute mutual information between subsystems and how it sets the length scale for the corresponding persistence barcodes.

\section{Persistent homology algorithm}
\label{sec:ph_algorithm}
The computation of persistent homology for a sequence simplicial complexes is in practice simply a row reduction of a matrix. More specifically, the boundary matrix encodes the boundaries present in the complex and each row and column represent a simplex. The rows and columns are also ordered by the appearance time of the simplices in the complex, such that the first column/row appears before the next in the sequence of complexes, and so on. In other words, this ordering originates from the input distance metric (e.g. Euclidean distance), as is visualized in Figure~\ref{fig:barcode_example}.

The \emph{standard algorithm} (sometimes called the \emph{column algorithm}) reduces the boundary matrix, see Algorithm 1 below \cite{books/daglib/0025666,10.5555/795666.796607}.
\begin{algorithm}[H]
  \caption{Column algorithm}
  \begin{algorithmic}[1]
      \For{$j=1$ to $m$}
        \While{there exists $j_0<j$ with $low(j_0)=low(j)$}
            \State add column $j_0$ to column $j$
        \EndWhile
      \EndFor
  \end{algorithmic}
  \label{alg:1}
 \end{algorithm}
The worst-case time complexity of this algorithm is cubic in the number of simplices, and a sparse matrix implementation is possible \cite{10.5555/795666.796607}. From the reduced boundary matrix we can read off the barcode. For other coefficients than $\mathbb{Z}_2$, see \cite{10.5555/3115476.3115792}.

We now demonstrate the algorithm on an example simplicial complex,
\begin{equation}
    K=\ \begin{gathered}\includegraphics[scale=.8]{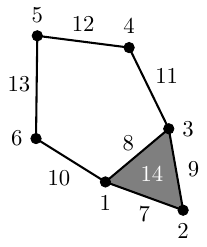}\end{gathered}
\end{equation}
This corresponds to a space of a disk (2-simplex number 14) glued to a 1-dimensional hole (cycle of 1-simplices, 8-11-12-13-10). First, deconstruct the simplicial complex into a boundary matrix,
\begin{equation}
    B=\ \begin{gathered}\includegraphics[scale=.6]{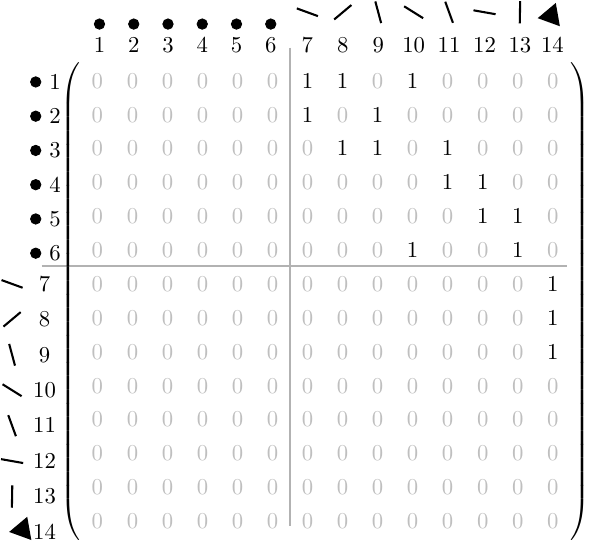}\end{gathered}
     \label{eq:full_boundary_matrix}
\end{equation}
where the non-zero rows indicate the boundary of the corresponding column. Although we did not specify the ordering when defining the simplicial complex, the simplices in the columns and rows are also ordered by their appearance time in the complex.

Next, reduce the boundary matrix following Algorithm~\ref{alg:1} and indicate the lowest $1$s with a box,
\begin{equation}
    B=\ \begin{gathered}\vspace{2cm}\includegraphics[scale=.6]{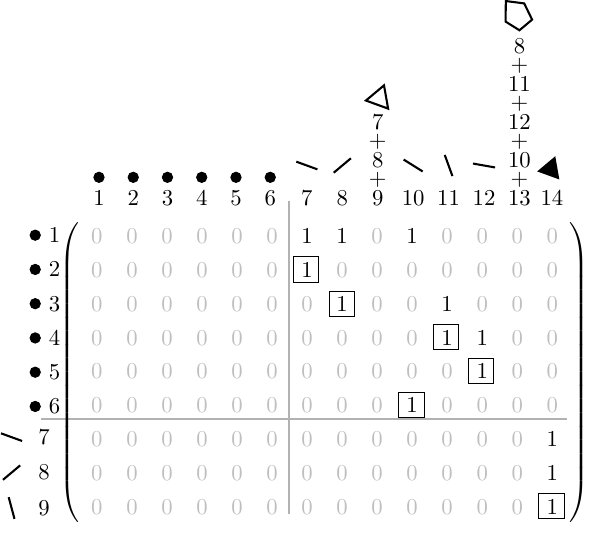}\end{gathered}
    \label{eq:reduced_boundary_matrix}
\end{equation}
The persistence pairs can be harvested from the reduced boundary matrix. An all-zero column indicates the \emph{birth} of a persistence pair. It \emph{dies} by pairing it with a lowest $1$. For example, the column $9$ is all-zero and initiates a pair. For the corresponding row $9$, it pairs with column 14, leading to persistence pair $[9,14)_1$, where the subscript indicates the birth simplex is $1$-dimensional. The all-zero columns that cannot be paired live until infinity. In our example, all the persistence pairs are:
\begin{align}
    [2,7)_0\\
    [3,8)_0\\
    [9,14)_1\\
    [6,10)_0\\
    [4,11)_0\\
    [5,12)_0\\
    [13,\infty)_1\\
    [1,\infty)_0
\end{align}
As expected, only one connected component (0 dimensionality hole) survives to infinity: $[1,\infty)_0$.

The persistence pairs are typically visualized in a barcode (see Fig.~\ref{fig:barcode_example}), where the birth and death indicate the start and end of the bar. Note that the explicit length scales associated to the appearance of each simplex is not part of the persistent homology algorithm, only the ordering is. The birth and death simplices can be translated into birth and death length scales when plotting a barcode by referring to the input sequence of simplicial complexes that was generated using a distance metric.

\section{Non-orientable surfaces and torsion}
\label{sec:non_orientable_torsion}
The entanglement structure of a quantum state is converted into a topological space using mutual information as a metric. This space is then studied using simplicial homology with $\mathbb{Z}_2$ coefficients, which means that simplices either are included or excluded when talking about loops in this space. This is computationally easier, however, the more general case of homology with integer coefficients (i.e. integral homology), is able to capture more information. In this section we study this in detail and show that manifolds may have different homology groups depending on the choice of coefficients.

Non-orientable surfaces are twisted and turn clockwise objects to counterclockwise, with common examples being the Möbius strip, Klein bottle and the real projective plane. A surface is orientable if it has a consistent notion of clockwise rotation as we move around. It turns out that non-orientability is connected to the notion of \emph{torsion}. Algebraically, torsion means that a group element has \emph{finite order} (i.e. $g^n=e$ for a positive integer $n$) and a group where all elements fit this condition is called a torsion group. A torsion-free group has no elements of finite order, other than the identity element, e.g. $(\mathbb{Z},+)$. Both orientable (e.g. $\mathbb{R}\mathbb{P}^3$) and non-orientable (e.g. $\mathbb{R}\mathbb{P}^2$) manifolds may have torsion in its integral homology. However, there is a theorem that states that if $M$ is a closed connected $n$-manifold that is non-orientable, the homology group $H_{n-1}$ contains torsion \cite[Corollary~3.28]{Hatcher2001-mw}.

In general, the $k$-homology group over the integers for a simplicial complex $K$ has the following form
\begin{equation}
    H_k(K,\mathbb{Z})=\overbrace{\mathbb{Z} \oplus \dots \oplus \mathbb{Z}}^{\beta_k} \oplus\ \mathbb{Z}_{k_1} \oplus \dots \oplus \mathbb{Z}_{k_n}
\end{equation}
where $\beta_k$ is the Betti number and $k_i$ are the torsion coefficients. However, it is common to compute homology over $\mathbb{Z}_2$ rather than the integers, leading to a different kind of homology, namely $H_k(K,\mathbb{Z}_2)$. To illustrate the absence or presence of torsion in non-orientable surfaces, we start with the Möbius strip and then proceed to the real projective plane.

\begin{figure}[t]
    \centering
    \includegraphics{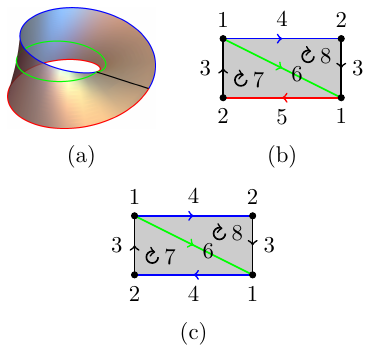}
    \caption{(a) Möbius strip that is triangulated by glueing together two 2-simplices. The faces are four unique 1-simplices and two 0-simplices. (b) Simplicial complex of the Möbius strip showing the orientation of the simplices. (c) Simplicial complex of the real projective plane that is constructed by glueing the sides of the Möbius strip together, this can be done in $\mathbb{R}^4$ without the surface intersecting itself.}
    \label{fig:moebius_rp2}
\end{figure}

Figure~\ref{fig:moebius_rp2} (a) and (b) shows the Möbius strip $M$ and its corresponding simplicial complex. We now compute its 1-homology over the integers. The chain groups for the Möbius strip $M$ are
\begin{align}
    C_0&=\left<[1],[2]\right>\\
    C_1&=\left<[3],[4],[5],[6]\right>\\
    C_2&=\left<[7],[8]\right>
\end{align}
where the simplices form a basis and the coefficients are hidden, e.g. $C_0=\{a_1[1]+a_2[2]\}$.
The $1$-cycles are a subgroup of $C_1$, namely
\begin{align}
    Z_1&=\{c \in C_1\ |\ \partial(c) = 0\}\\
    &=\{a_3[3]+a_4[4]+a_{5}[5]+a_{6}[6]\}
\end{align}
where the coefficients $a_i\in \mathbb{Z}$ are constrained by $\partial(c)=0$. In order to find the space of 1-cycles, we apply the boundary operator (Equation~\ref{eq:boundary_operator_Z}) to the 1-simplices,
\begin{align}
    \partial [3]&=[1]-[2]\\
    \partial [4]&=[2]-[1]\\
    \partial [5]&=[2]-[1]\\
    \partial [6]&=[1]-[1]=0
\end{align}
and row reduce this system of equations. The space of 1-cycles is generated by
\begin{equation}
    Z_1=\left<[3]+[4],[3]+[5],[6]\right>.
\end{equation}
Next, to find the boundary group $B_1=\text{Im}\ \partial_2=\partial_2(C_2)$ we apply the boundary operator to the 2-simplices:
\begin{align}
    \partial[7] &= [3]+[6]+[5]\\
    \partial[8] &= [3]-[6]+[4].
\end{align}
Therefore, the boundary group $B_1$ is generated by
\begin{align}
    B_1=\left<[3]+[6]+[5],[3]-[6]+[4]\right>
\end{align}
Finally, we find that the $1$-homology (cycles mod boundaries) over the integers for the Möbius strip $M$ is
\begin{align}
    H_1(M,\mathbb{Z})&=\frac{\left<[3]+[4],[3]+[5],[6]\right>}{\left<[3]+[6]+[5],[3]-[6]+[4]\right>}\\
    &\cong\frac{\left<[3]+[4]+[6],[3]+[5]-[6],[6]\right>}{\left<[3]+[6]+[5],[3]-[6]+[4]\right>}\\
    &\cong \left<[6]\right>\\
    &\cong \mathbb{Z}.
\end{align}
The number of one-dimensional holes is described by the Betti number $\beta_1=1$. Geometrically, we can see this loop in Fig.~\ref{fig:moebius_rp2} (a) in the middle of the Möbius strip, and observe that the group $H_1$ is torsion-free.

In the case of $\mathbb{Z}_2$ coefficients, the boundary operator is
\begin{equation}
    \partial S=\sum_{j=0}^k\left[v_0 v_1\dots \hat{v}_j \dots v_k\right]
\end{equation}
with $\hat{v}_j$ simplex being left out. Simplices appearing twice mod out to zero, and again $\partial^2 S=0$ is true. The Betti number $\beta_k$ is now defined as the number of copies of $\mathbb{Z}_2$. Performing the same steps are before we arrive at the 1-homology with $\mathbb{Z}_2$ coefficients for the Möbius strip,
\begin{align}
    H_1(M,\mathbb{Z}_2)&=\frac{\left<[3]+[4],[3]+[5],[6]\right>}{\left<[3]+[6]+[5],[3]+[6]+[4]\right>}\\
    &\cong\frac{\left<[3]+[4]+[6],[3]+[5]+[6],[6]\right>}{\left<[3]+[6]+[5],[3]+[6]+[4]\right>}\\
    &\cong \left<[6]\right>\\
    &\cong \mathbb{Z}_2 \label{eq:mobius_H_z2}
\end{align}
where the Betti number $\beta_k$ is now defined as the number of copies of $\mathbb{Z}_2$ (i.e. $\beta_1=1$ in this case).

The real projective plane $\mathbb{R}\mathbb{P}^2$ can be constructed by embedding the Möbius strip in $\mathbb{R}^4$ and glueing the sides together with the right orientation, see Fig.~\ref{fig:moebius_rp2} (b) and (c). The chain groups for this manifold are the same except for the 1-chains and 1-cycles
\begin{align}
    C_1&=\left<[3],[4],[6]\right>\\
    Z_1&=\{c \in C_1\ |\ \partial(c) = 0\}\\
    &=\{a_3[3]+a_4[4]+a_{6}[6]\}
\end{align}
Again, the coefficients are constrained by the boundary operator $\partial(c)=0$ and the space of the 1-cycles is generated by
\begin{equation}
    Z_1=\left<[3]+[4],[6]\right>.
\end{equation}
The boundary of the 2-simplices are now
\begin{align}
    \partial[7] &= [3]+[6]+[4]\\
    \partial[8] &= [3]-[6]+[4]
\end{align}
The boundary group $B_1$ is generated by
\begin{equation}
    B_1=\left<[3]+[6]+[4],[3]-[6]+[4]\right>
\end{equation}
The 1-homology group for the real projective plane
\begin{align}
    H_1(\mathbb{R}\mathbb{P}^2,\mathbb{Z})&=\frac{\left<[3]+[4],[6]\right>}{\left<[3]+[6]+[4],[3]-[6]+[4]\right>}\\
    &\cong \frac{\left<[3]+[4]+[6],[6]\right>}{\left<[3]+[6]+[4],2[6]\right>}\\
    &\cong \frac{\left<[6]\right>}{\left<2[6]\right>}\cong \frac{\mathbb{Z}}{2\mathbb{Z}}\equiv \mathbb{Z}_2
\end{align}
The real projective plane has 2-torsion and a Betti number $\beta_1=0$. Geometrically, it means that the single move from $[1]$ to $[1]$ is a loop, but it cannot be contracted to a point, however, twice the loop (i.e. walking around the rectangular perimeter of the simplicial complex in Fig.~\ref{fig:moebius_rp2}) is contractible to zero.

In the case of $\mathbb{Z}_2$ coefficients, the real projective plane has a 1-homology of
\begin{align}
    H_1(\mathbb{R}\mathbb{P}^2,\mathbb{Z}_2)&=\frac{\left<[3]+[4],[6]\right>}{\left<[3]+[6]+[4],[3]+[6]+[4]\right>}\\
    &\cong \frac{\left<[3]+[4]+[6],[6]\right>}{\left<[3]+[6]+[4],2[6]\right>}\\
    &\cong \left<[6]\right>\cong \mathbb{Z}_2.
\end{align}
This is the same result as for the Möbius strip (Equation~\ref{eq:mobius_H_z2}), with Betti number $\beta_1=1$, so it is clear that homology over $\mathbb{Z}_2$ coefficients cannot distinguish between these two topological spaces. In contrast, the integral homology is different between the two spaces due to the presence of torsion in the real projective plane. However, it is also important to note that orientable manifolds (e.g. $\mathbb{R}\mathbb{P}^3$) can have torsion in its integral homology groups.

More generally, there is the \emph{universal coefficient theorem} that relates the integral homology to homology with different coefficients $A$, and it turns out that the integral homology completely determines the homology groups for any abelian group $A$ \cite[Section 3.A]{Hatcher2001-mw}.

Our study of entanglement structures only considers homology with coefficients $\mathbb{Z}_2$ since it is the most computationally straightforward. However, as shown here, integral homology groups are sometimes more informative. There exist algorithms that efficiently compute persistent homology with different coefficients and deduce the torsion subgroups of the integral homology \cite{Boissonnat_2019}. Whether torsion provides important information in the study of entanglement structures is unclear and the topic of future work.

\section{Transverse-field Ising chain: entanglement entropy at criticality}
\label{sec:exact_MI_TFIM}
The demonstration of persistent homology on the one-dimensional transverse-field Ising chain (with periodic boundary conditions) relied on the general approach of exact diagonalization (ED), and this scales to finite-size systems up to about $L=30$ sites. However, this model is exactly solvable and this makes it possible to compute the entanglement entropy where a finite number of sites are in subsystem A while the rest of the infinitely long chain is in subsystem B. This also provides exact results for the entanglement entropy and the mutual information used in our work.

Persistent homology captures the structure of the entanglement as described by the two-site mutual information $M_{ij}$. Translational invariance means that the mutual information only depends on the distance between two sites, i.e. $M_{ij}=M_{0r}$ with $r=|i-j|$. The distance metric used in persistent homology is simply the inverse of the mutual information, such that high mutual information means two sites are close, see Eq.~\ref{eq:MI_distance}.

At the critical point, the asymptotic behaviour of the persistent homology for this model is governed by two special cases. The first case sets the birth length scale of the 1-homology, when neighboring spins are connected and form a cycle (Fig.\ref{fig:TFIM_field_barcodes}(b)). This is set by the amount of mutual information between neighboring spins, i.e. $M_{01}$. The second case is the point when the 1-hole is closed and it sets the death length scale. This is set by the amount of mutual information between spins on opposite sides of the chain, which in a case of an infinite chain corresponds to $M_{0\infty}$. Therefore, we need to compute the entanglement entropies $S_{1}$, $S_{01}$ and $S_{0\infty}$ (see Eq.~\ref{eq:mutual_information}). The exact one-site and two-site reduced density matrix and the spin-spin correlation functions for the infinite transverse-field Ising chain (at criticality) are known \cite{PFEUTY197079,PhysRevA.3.786,PhysRevA.66.032110,720404}.

Starting with the case of the single site in subsystem A, the reduced density matrix is
\begin{align}
    \rho_1=\frac{I+\left<\sigma^z\right>\sigma^z}{2},
\end{align}
where $I$ and $\sigma^z$ are the identity matrix and the Z Pauli matrix, respectively. For the ground state, the transverse magnetization is \cite{PhysRevA.66.032110} 
\begin{align}
    \left< \sigma^z \right> = \frac{1}{\pi} \int_0^\pi \mathrm{d}\phi \frac{1+\lambda\cos \phi }{\sqrt{1+\lambda^2+2\lambda \cos\phi}}=\frac{2}{\pi}.
\end{align}
This gives the exact entanglement entropy,
\begin{align}
    \frac{S_1}{\ln{2}} &= \frac{-\text{Tr} \left( \rho_1 \ln \rho_1 \right)}{\ln{2}}\\
    &= \frac{2\pi\ln{\left(\frac{4\pi^2}{\pi^2-4}\right)}-8\tanh ^{-1}\left(\frac{2}{\pi }\right)}{\pi  \ln (16)}\\
    &\approx 0.68376,
\end{align}
normalized by $\ln{(2)}$ such that it ranges from 0 to 1.

The two-site case takes two spins $i$ and $j$ in subsystem A,
\begin{align}
\rho_{0r} =\frac{I_{0r}+\left<\sigma^z\right>\left(\sigma_0^z+\sigma_r^z\right)+\sum_{k=1}^3\left<\sigma_0^k\sigma_r^k\right>\sigma_0^k\sigma_r^k}{4},
\end{align}
which can be described with $r=|i-j|$ due to translational symmetry. The spin-spin correlation functions are known at criticality,
\begin{align}
    \left<\sigma_0^x\sigma_r^x\right>&=\left(\frac{2}{\pi}\right)^r 2^{2r(r-1)}\frac{H(r)^4}{H(2r)}\\
    \left<\sigma_0^y\sigma_r^y\right>&=-\frac{\left<\sigma_0^x\sigma_r^x\right>}{4r^2-1}\\
    \left<\sigma_0^z\sigma_r^z\right>&=\left<\sigma_0^z\right>\left<\sigma_r^z\right>+\frac{4}{\pi^2}\frac{1}{4r^2-1},
\end{align}
with $H(r)=1^{r-1}2^{r-2}\dots (r-1)$. The two-site entanglement entropy for neighboring sites is
\begin{align}
    \frac{S_{01}}{2\ln{2}} &= \frac{-\text{Tr} \left( \rho_{01} \ln \rho_{01} \right)}{2\ln{2}}\\
    &\approx 0.42805.
\end{align}

For the two-site case with limit $r\rightarrow\infty$, the correlation functions are $\left<\sigma_0^x\sigma_r^x\right>=0$, $\left<\sigma_0^y\sigma_r^y\right>=0$ and $\left<\sigma_0^z\sigma_r^z\right>=\left<\sigma_0^z\right>\left<\sigma_r^z\right>=4/\pi^2$, and the entanglement entropy,
\begin{align}
    \frac{S_{0\infty}}{2\ln{2}} &= \frac{-\text{Tr} \left( \rho_{0\infty} \ln \rho_{0\infty} \right)}{2\ln{2}}\\
    &=\frac{S_1}{\ln{2}}\\
    &\approx 0.68376.
\end{align}
For completeness, Fig.~\ref{fig:S_01_TFIM} shows the entanglement entropy as a function of $r$. This shows that neighboring sites with $r=1$ have the lowest entanglement entropy and this leads to the smallest mutual information distance. This matches what we see for a small chain with $h=1$ in Fig.~\ref{fig:1d_qmi_tfim} (b).

\begin{figure}[t]
    \centering
    \includegraphics{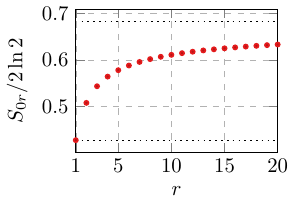}
    \caption{Exact two-site entanglement entropy $S_{0r}/2\ln{2}$ (where $r=|i-j|$) for an infinitely long transverse-field Ising chain with periodic boundary conditions. The dotted lines indicate the minimum $S_{01}$ and the maximum $S_{0\infty}$.}
    \label{fig:S_01_TFIM}
\end{figure}

Finally, we find that the mutual information (Eq.~\ref{eq:mutual_information}) for the case of two neighboring spins is
\begin{align}
    \frac{M_{01}}{2\ln{2}}=\frac{S_0+S_1-S_{01}}{2\ln{2}}\approx 0.255707.
\end{align}
Therefore the distance (Eq.~\ref{eq:MI_distance}) $D_{01}=1-M_{01}\approx 0.744293$, which matches the birth length scale shown in Fig.~\ref{fig:ising_H1_b_d}. When the spins are far apart, the mutual information is zero,
\begin{align}
    \frac{M_{0\infty}}{2\ln{2}}=\frac{S_0+S_1-S_{0\infty}}{2\ln{2}}=0,
\end{align}
meaning that the distance is the maximum $D_{0\infty}=2\ln{2}$. This refers to the death length scale Fig.~\ref{fig:ising_H1_b_d}. In other words, the birth and death of the 1-homology of this model at criticality converge to their expected values.

\section{Transverse-field Ising chain: persistent homology with open boundary conditions}
\label{sec:TFIM_OBC}

\begin{figure}[b]
    \centering
    \includegraphics{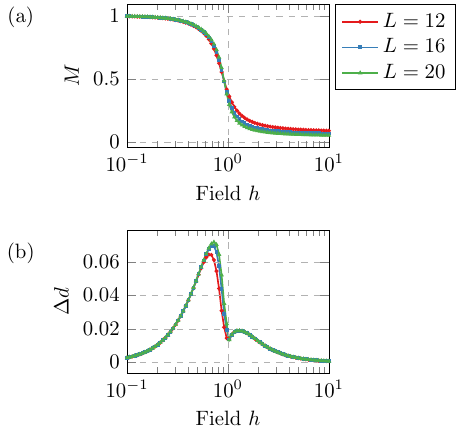}
    \caption{The ground state magnetization $M=|1/N \sum_{i}\sigma_i^x|$ of the transverse-field Ising chain with open boundary conditions (a). Range of the set of 0-homology deaths, where the maximum range coincides with the transition close to $h=1$.}
    \label{fig:TFIM_OBC}
\end{figure}

The entanglement structure of a quantum state and its corresponding (persistence) barcode depend on the Hamiltonian and its boundary conditions. For example, the existence of the 1-cycle for the tranverse-field Ising chain at criticality (Fig.~\ref{fig:TFIM_field_barcodes} (b)) relies on periodic boundary conditions (PBC) being present. This raises the question whether persistent homology is able to detect transitions without PBC. Here we show that the model with open boundary conditions also shows structure in its persistence barcode. However, it is the 0-homology instead of the 1-homology that indicates a phase transition.

Without translational invariance due to PBC, the mutual information between pairs of neighboring spins is not always exactly the same. For example, close to the critical transverse-field of $h=1$, we see that some pairs are closer as defined by the mutual information distance metric (Eq.~\ref{eq:MI_distance}), which means that the spins connect and form 1-simplices earlier. In terms of 0-homology (connected components), this coincides with the end (typically referred to as death) of a bar, e.g. Fig.~\ref{fig:barcode_example}. Note that 0-homology differs from 1-homology in that the bars always start (typically referred to as birth) at zero, indicating a disconnected data point (spin in our case). Therefore the only information is the set of death values. Since the 0-homology bars are all born at zero, this is equivalent to the set of lifetimes. The changes in 0-homology can be summarized in many ways, and here we plot the difference between the maximum and minimum of the set of death values Fig.~\ref{fig:TFIM_OBC} (b), and this indicates the transition around $h=1$ at its peak, similarly to the traditional magnetization (Fig.~\ref{fig:TFIM_OBC} (a)). This example shows that persistent homology is a general tool that does not necessarily rely on PBC being present.

\end{document}